\newcommand{\eps}{\varepsilon}
\newcommand{\length}{\ensuremath{\mathrm{length}}}
\newtheorem{theorem}{Theorem}
\newtheorem{lemma}{Lemma}
\newenvironment{proof}{\trivlist\item[]\emph{Proof}:}
{\unskip\nobreak\hskip 1em plus 1fil\nobreak$\Box$
\parfillskip=0pt
\endtrivlist}
\begin{document}

\title{Low-Memory Adaptive Prefix Coding}
\author{Travis Gagie\thanks{
Department of Computer Science,  University of Eastern Piedmont.  Email: {\tt travis@mfn.unipmn.it}~.  Supported by Italy-Israel FIRB grant ``Pattern Discovery Algorithms in Discrete Structures, with Applications to Bioinformatics''.}
  \and Marek Karpinski\thanks{
Department of Computer Science, University of Bonn. Email: {\tt \{marek,yasha\}@cs.uni-bonn.de}~.}
\and Yakov
  Nekrich\footnotemark[2]}
\date{}
\maketitle

\begin{abstract}
  In this paper we study the adaptive prefix coding problem in  cases where  
the
   size of the input alphabet is large.  We present an online prefix coding algorithm that uses
  $O(\sigma^{1 / \lambda + \epsilon}) $ bits of space for any constants $\eps>0$, 
  $\lambda>1$, and
  encodes the string of symbols in  $O(\log \log \sigma)$ time per symbol
  \emph{in the worst case}, where $\sigma$ is  the size of the alphabet.
  The upper bound on the encoding length is
  $\lambda n H (s) +(\lambda \ln 2 + 2 + \epsilon) n +
  O (\sigma^{1 / \lambda} \log^2 \sigma)$ bits.
\end{abstract}

\section{Introduction} \label{sec:intro}
In this paper we present an algorithm for  adaptive prefix coding
that uses sublinear space in the size of the alphabet.
Space usage can be an important issue in situations where the
available memory is small; e.g., in mobile computing, when
the alphabet is very large, and when we want the data used by the algorithm to fit into first-level cache memory.

For instance,
Version 5.0 of the Unicode Standard~\cite{Uni06} provides code points
for 99\,089 characters, covering ``all the major languages written
today''.  The Standard itself may be the only document to contain
quite that many distinct characters, but there are over 50\,000
Chinese characters, of which everyday Chinese uses several
thousand~\cite{VZ98}.  One reason there are so many Chinese characters
is that each conveys more information than an English character does; if
we consider syllables, morphemes or words as basic units of text, then
the English `alphabet' is comparably large.  Compressing strings over
such alphabets can be awkward; the problem can be severely aggravated
if we have only a small amount of (cache) memory at our disposal.

Static and adaptive prefix encoding algorithms that
use linear space in the size of the alphabet were extensively studied.
The classical algorithm of Huffman~\cite{Huf52} enables us to construct
an optimal prefix-free code and encode a text
in two passes in $O(n)$ time. Henceforth in this paper,
 $n$ denotes the number of characters in the
 text, and $\sigma$ denotes the size of the alphabet;
 $H(s)= \sum_{i=1}^{\sigma} \frac{f_{a_i}}{n}\log_2\frac{n}{f_{a_i}}$
  is the zeroth-order entropy\footnote{For ease of description, we sometimes
simply denote the entropy by $H$ if the string $s$ is clear from the context.}
 of $s$,  where
$f_a$ denotes the number of occurrences of character $a$ in $s$.
The length of the encoding is $(H+d)n$ bits, and the  redundancy $d$ can be
estimated as $d\leq p_{\max}+0.086$ where  $p_{\max}$ is the
probability of the most frequent character~\cite{G78}.
The drawback to the static Huffman coding is the
need to make two passes over data: we collect the frequencies of
different characters during the first pass, and then construct the
code and encode the string during the second  pass.
Adaptive coding avoids this by maintaining a code for the prefix of the
input string that has already been read and encoded. When a new character $s_i$ is
read, it is encoded with the code for $s_1\ldots s_{i-1}$; then the code
is updated. The FGK algorithm~\cite{Knu85} for adaptive Huffman coding encodes the
string in $(H+2+d)n+O(\sigma\log\sigma)$ bits, while the adaptive Huffman algorithm of
 Vitter~\cite{Vit87}
 guarantees that the string is encoded in $(H+1+d)n+O(\sigma\log\sigma)$ bits.
The adaptive
Shannon coding algorithms  of Gagie~\cite{Gag07} and Karpinski and Nekrich~\cite{KN??}
encode the string in $(H+1)n+O(\sigma\log\sigma)$ bits and
$(H+1)n+O(\sigma\log^2\sigma)$ bits respectively.
All of the above algorithms use space at least linear in the size of the alphabet, to count how often each distinct character occurs.
All algorithms for adaptive prefix coding, with exception of~\cite{KN??},
encode and decode in  $\Theta(nH)$ time, i.e. the time to process the string
depends on $H$ and hence on the size of the input alphabet.
The algorithm of~\cite{KN??} encodes a string in $O(n)$ time, and decoding
takes $O(n\log H)$ time.

Compression with sub-linear space usage was studied by
Gagie and Manzini~\cite{GM07} who proved the following
lower bound:  For any $g$
independent of $n$ and any constants \(\epsilon > 0\) and \(\lambda>1\), 
in the worst
case we cannot encode $s$ in \(\lambda H (s) n + o (n \log \sigma) +
g\) bits if, during the single pass in which we write the
encoding, we use \(O (\sigma^{1 / \lambda - \epsilon})\) bits of
memory.
In~\cite{GM07} the authors also presented an algorithm that divides the
input string into
chunks of length $O(\sigma^{1/\lambda}\log \sigma)$ and
encodes each individual chunk with a modification of the
arithmetic coding, so that the string is encoded with
$(\lambda H(s) +\mu)n + O(\sigma^{1/\lambda}\log \sigma)$ bits.
However, their algorithm is quite complicated and uses arithmetic coding;
hence,  codewords are not self-delimiting and the encoding is not
`instantaneously decodable'. Besides that, their algorithm is
based on static encoding of parts of the input string.

In this paper we present an adaptive prefix coding algorithm that
uses $O(\sigma^{1 / \lambda + \epsilon}) $ bits of memory and encodes a string
$s$ with
$\lambda n H (s) +(\lambda \ln 2 + 2 + \epsilon) n +
O (\sigma^{1 / \lambda} \log^2 \sigma)$ bits.
The encoding and decoding work
in $O(\log \log \sigma)$ time per symbol \emph{in the worst case},
and the whole string $s$ is encoded/decoded in $O(n\log H(s))$ time.
A randomized implementation of our algorithm uses
$O(\sigma^{1 / \lambda} \log^2 \sigma)$ bits of memory and works in
$O(n \log H)$ expected time.
Our method is  based on a simple but effective form of alphabet-partitioning
(see, e.g.,~\cite{CCMW07} and references therein) to trade off the size of a
code and the compression it achieves: we split the alphabet into
frequent and infrequent characters; we preface each occurrence of a
frequent character with a 1, and each occurrence of an infrequent one
with a 0; we replace each occurrence of a frequent character by a
codeword, and replace each occurrence of an infrequent character by
that character's index in the alphabet.

We make a natural assumption
 that unencoded files consist of characters represented by their
indices in  the alphabet (cf. ASCII codes), so we can simply copy the
representation of an infrequent character from the original file.
One difficulty is that we cannot identify the frequent characters
using a low-memory one-pass algorithm: according to the lower
bound of~\cite{KSP03} any online algorithm that identifies
a set of characters $F$, such that each $s\in F$ occurs at
least $\Theta n$   times for some  parameter $\Theta$,
 needs $\Omega(\sigma\log\frac{n}{\sigma})$ bits
of memory in the worst case.  We overcome this difficulty
by maintaining the frequencies of symbols that occur in a
sliding window.

In section~\ref{sec:prelim}, we review the data structures that
are used by our algorithm. In section~\ref{sec:adaptive} we present
a novel encoding method, henceforth called \emph{sliding-window Shannon coding}.
Analysis of the sliding-window Shannon coding is given in
section~\ref{sec:analysis}.

\section{Preliminaries}
\label{sec:prelim}
 The dictionary data structure contains a set $S\subset U$, so
that for any element $x\in U$ we can determine whether $x$ belongs to
$S$. We assume that $|S|=m$. The following dictionary data structure
is described in ~\cite{HMP01}
\begin{lemma} \label{lemma:hmp}
  There exists a $O(m)$ space dictionary data structure
  that can be constructed in $O(m\log m)$ time and supports membership
  queries in $O(1)$ time. \end{lemma}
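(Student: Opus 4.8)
The plan is to realize the dictionary as a static Fredman--Koml\'os--Szemer\'edi (FKS)-style two-level perfect hash table, and to argue that every hash function it needs can be found deterministically in $O(m\log m)$ time on a word RAM. Recall that in the FKS scheme a first-level function $h$ distributes the $m$ keys of $S$ among $m$ buckets $B_0,\ldots,B_{m-1}$; writing $b_i=|B_i|$, if $\sum_i b_i^2 = O(m)$ then inside each bucket the keys can be stored in a table of size $b_i^2$ addressed by a secondary function injective on $B_i$, and the total space is $m+\sum_i b_i^2=O(m)$ words. A lookup then evaluates two hash functions and performs two array accesses, for $O(1)$ query time. So the whole statement reduces to two deterministic subtasks inside the time budget: (i) find a first-level $h$ with $\sum_i b_i^2=O(m)$, and (ii) for each bucket find a secondary function injective into $[b_i^2]$.

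First I would reduce the universe. Operating directly over $U=\{0,\ldots,2^w-1\}$ is too expensive, so I would find a function $g:U\to[m^{2}]$ that is one-to-one on $S$ and prepend it to the two-level scheme; this is harmless since $g$ costs $O(1)$ to evaluate. A multiplicative map $g(x)=(ax \bmod 2^w)\,\mathrm{div}\,2^{w-\lceil 2\log m\rceil}$ with an odd multiplier $a$ is injective on $S$ for all but a small fraction of choices of $a$; the delicate point, and the source of the $\log m$ factor, is to locate a good $a$ quickly. I would do this by the method of conditional probabilities, fixing the bits of $a$ one at a time and estimating at each stage the number of surviving colliding pairs. The key is that once a partial reduction is in place the reduced keys occupy only $O(\log m)$ bits, so $\Theta(w/\log m)$ of them pack into a single machine word, and collisions among many pairs can be counted at once by word-level parallelism (packed comparisons/sorting, popcount-style tricks), keeping each stage $O(m)$ and the total $O(m\log m)$.

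With the keys now in a universe of size $m^{O(1)}$, I would build the two levels in the same style. For the first level I need $h$ with $\sum_i b_i^2=O(m)$; a multiplicative (or $(ax\bmod p)\bmod m$) function chosen by the same conditional-probabilities-plus-packed-words search works, because a random such $h$ has $E[\sum_i b_i^2]=O(m)$, and $\sum_i b_i^2$ equals $m$ plus twice the number of colliding pairs under $h$ --- exactly the quantity the packed-word machinery evaluates. For the second level, bucket $B_i$ has $b_i$ keys in a small universe and needs a map injective into $[b_i^2]$; since $\sum_i b_i=m$, processing all buckets is $O(m)$ total work (even a brute-force scan over candidate multipliers per bucket is affordable, as both the candidate set and the bucket are small), and the subtables of sizes $b_i^2$ sum to $O(m)$. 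Composing $g$, then $h$, then the per-bucket functions, a query is a constant number of multiplications, shifts, and array reads.

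The main obstacle is squarely the construction time: a naive derandomization of FKS spends polynomial time hunting for each hash function, and the content of~\cite{HMP01} is to bring this down to $O(m\log m)$. The two ingredients that make it go through are (a) reducing to a polynomial-size universe \emph{first}, so that afterwards $\Theta(w/\log m)$ keys fit in one word, and (b) replacing exhaustive search by a bit-by-bit conditional-probabilities construction whose per-stage "badness" estimate is a collision count computed by a constant number of word operations on packed keys. I would present these two steps carefully; the space and query bounds then follow from the standard FKS analysis, and for the precise word-RAM manipulations I would defer to~\cite{HMP01}.
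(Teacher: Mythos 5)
The paper offers no proof of Lemma~\ref{lemma:hmp}; the statement is imported as a black box from~\cite{HMP01}, so there is nothing in the paper for your argument to be compared against. Your sketch is nonetheless a fair high-level reconstruction of the Hagerup--Miltersen--Pagh result: the $O(m)$ space and $O(1)$ query time come from a standard FKS two-level perfect hash table, and the $O(m\log m)$ construction time is obtained by derandomizing the choice of hash functions via the method of conditional expectations, accelerated by packing many reduced keys into a single machine word so that collision counts can be evaluated with a constant number of word operations per stage. The one point your account leaves elliptic is the bootstrap: before any universe reduction has been carried out the keys occupy full $w$-bit words and cannot yet be packed, so it is not obvious that the very first stages of the multiplier search fit the $O(m)$-per-stage budget that your argument relies on. Resolving exactly that, via a staged reduction and careful control of how many multiplier bits actually need to be searched at each width, is the technical heart of~\cite{HMP01}, to which you (appropriately, given how this paper uses the lemma) already defer.
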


In the case of a polynomial-size universe, we can easily construct a
data structure that uses more space but also supports updates. The
following Lemma is folklore.
\begin{lemma}\label{lemma:trie}
  If  $|U|=m^{O(1)}$,  then there exists a
  $O(m^{1+\eps})$ space dictionary data structure that can be
  constructed in $O(m^{1+\eps})$ time and supports membership queries
  and updates in $O(1)$ time.
\end{lemma}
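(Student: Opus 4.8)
The plan is to realize the dictionary as a trie of constant depth over the polynomial-size universe, with child pointers kept in direct-address arrays, and to meet the $O(1)$ worst-case requirement by the classical lazy-array-initialization trick. Write $|U|\le m^{c}$ for a constant $c$, and let $B=2^{\lceil\eps\log_2 m\rceil}$, so that $B=\Theta(m^{\eps})$ is a power of two and $t:=\lceil c/\eps\rceil$ is a constant with $B^{t}\ge |U|$. Every $x\in U$ can then be read as a string $x_{1}x_{2}\cdots x_{t}$ of $t$ digits in $\{0,\dots,B-1\}$, each extracted from $x$ by a shift-and-mask in $O(1)$ time on the word RAM. I would build a trie of height $t$ over these strings: an internal node at depth $i<t$ is an array of $B$ pointers to depth-$(i{+}1)$ nodes together with a counter of its currently non-null children, and a depth-$t$ leaf records membership (and, if wanted, the associated value). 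A membership query or a value update walks the $t$-edge root-to-leaf path of $x$ in $O(t)=O(1)$ time; an insertion does the same walk and in addition creates the at most $t$ missing nodes on the path; a deletion clears the leaf and walks back to the root, decrementing child-counters and returning to a free list every node whose counter drops to $0$.

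The step I expect to be the main obstacle is keeping \emph{every} operation $O(1)$ in the worst case, because creating a fresh internal node must not cost the $\Theta(B)$ time that zeroing its array of $B$ pointers would take. I would resolve this with the classical in-place lazy-initialization technique for arrays (e.g.\ Exercise~2.12 of Aho, Hopcroft and Ullman): each node stores, besides its pointer array, two further $B$-entry scratch arrays and one small counter, arranged so that ``initialize the array'', ``read a cell'' and ``write a cell'' all run in $O(1)$ worst-case time; allocating a node, or recycling one off the free list, then costs only the $O(1)$ re-initialization. The care needed here is to check that recycled nodes never expose stale pointers and that all of a node's arrays sit inside one preallocated block, so that no per-operation memory allocation is required.

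Finally I would bound space and construction time. At any moment at most $m$ keys are present, so at most $1{+}tm=O(m)$ trie nodes are live; hence it suffices to preallocate a pool of $O(m)$ node slots, each of $O(B)$ words (the pointer array plus the constant number of scratch arrays), for a total of $O(mB)=O(m^{1+\eps})$ words, with the unused slots threaded onto the free list. Setting up this pool and the free list takes $O(m^{1+\eps})$ time, matching the claimed construction bound, and every query and update runs in $O(1)$ worst-case time as above.
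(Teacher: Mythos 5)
Your proof is correct and takes the same core approach as the paper: a trie of constant depth over the polynomial-size universe with branching factor $\Theta(m^{\eps})$, giving $O(m)$ internal nodes of size $O(m^{\eps})$ each, hence $O(m^{1+\eps})$ space and construction time. The paper dismisses the $O(1)$-update claim with a bare ``clearly''; you correctly identify that allocating a fresh internal node is the potential $\Theta(m^{\eps})$-time pitfall and close the gap with lazy array initialization and a preallocated, free-listed node pool. An alternative, consistent with the paper's charging an $O(m^{1+\eps})$ construction phase up front, is to zero the entire pool once and then re-null each child pointer individually as the corresponding subtree is torn down during deletion, so that every node returned to the free list is already all-null and needs no re-initialization; this avoids the scratch arrays at the cost of a slightly more careful deletion routine. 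Either way the lemma holds, and your version is the more explicit and self-contained of the two.
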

\begin{proof} We regard $S$ as a set of binary strings of length $\log
  U$. All strings can be stored in a trie $T$ with node degree
  $2^{\eps'\log U}=m^{\eps}$, where $\eps'=(\log U/\log m)\cdot\eps$.
  The height of $T$ is $O(1)$, and the total number of internal nodes
  is $O(m)$. Each internal node uses $O(m^{\eps})$
space; hence, the
  data structure uses $O(m^{1+\eps})$ space and can be constructed in
  $O(m^{1+\eps})$ time. Clearly, queries and updates are supported in
  $O(1)$ time. \end{proof}

If we allow randomization, then the dynamic $O(m)$ space
dictionary can be maintained. We can use the result of~\cite{DKM+94}:
\begin{lemma}\label{lemma:rand}
There exists a  randomized $O(m)$ space dictionary data structure that supports
 membership queries
  in $O(1)$ time and updates in $O(1)$ expected time.
\end{lemma}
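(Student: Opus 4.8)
The plan is to invoke the dynamic perfect hashing scheme of Dietzfelbinger et al.~\cite{DKM+94}, so I only sketch the construction and indicate which steps need checking. The structure is a dynamized two-level (FKS-style) hash table: a \emph{primary} hash function $h$, drawn at random from a universal family, maps $U$ into $\Theta(m)$ buckets, and with each bucket $B_j$ of size $b_j$ we store a \emph{secondary} table of size $\Theta(b_j^2)$ together with a secondary hash function $h_j$ that is injective on $B_j$. A membership query for $x$ then costs $O(1)$ in the worst case: evaluate $h(x)$ to locate the bucket, evaluate $h_j(x)$, and compare against the single key held in that cell. The space is $O(m) + O\!\left(\sum_j b_j^2\right)$, so the key invariant to maintain is $\sum_j b_j^2 = O(m)$.

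First I would recall the two static sampling facts behind FKS hashing: for a random primary function $h$ the sum $\sum_j b_j^2$ is $O(m)$ with probability bounded below by a constant, and, for any fixed bucket $B_j$, a random secondary function into a table of size $\Theta(b_j^2)$ is injective on $B_j$ with probability bounded below by a constant. Hence each required function can be obtained by rejection sampling in an expected constant number of trials, so a full rebuild of the structure on $m$ keys takes expected $O(m)$ time and a single-bucket rebuild takes expected $O(b_j)$ time. For an insertion of $x$ I would compute its bucket; if the target cell of the secondary table is empty I store $x$ there, and otherwise — a collision, or $b_j$ having grown past the current capacity of its table — I rebuild that one bucket with a fresh, larger secondary table and a fresh injective $h_j$. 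I would maintain a global counter tracking $\sum_j b_j^2$ and the number of operations since the last global rebuild, and whenever the counter exceeds its budget, or $\Theta(m)$ operations have elapsed, rebuild the entire table with a new primary function. Deletions would simply clear the cell and decrement the counters, with the periodic global rebuild reclaiming the space.

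The step I expect to be the main obstacle is the amortized-expected analysis of the update cost. One must argue that (i) $\Omega(m)$ update operations separate two consecutive global rebuilds, so that the expected $O(m)$ cost of a global rebuild amortizes to $O(1)$ per update, and (ii) the total expected cost of all single-bucket rebuilds between two global rebuilds is $O(m)$, which follows because a bucket is rebuilt $O(1)$ times per doubling of its size and each rebuild of a size-$b$ bucket costs expected $O(b)$, charged against the $\Omega(b)$ insertions that caused the growth. Together these give $O(1)$ amortized expected time per update while the queries stay $O(1)$ worst case and the space stays $O(m)$; here $U$ is assumed to support the constant-time word operations needed to evaluate the hash functions of~\cite{DKM+94}, which is the model of this paper. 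The remaining details are exactly those of~\cite{DKM+94}.
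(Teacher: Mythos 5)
Your proposal is correct and follows the same route as the paper: the paper gives no proof of this lemma, simply citing the dynamic perfect hashing result of Dietzfelbinger et al.~\cite{DKM+94}, and your sketch is precisely an outline of that construction (two-level FKS hashing dynamized by bucket-local and periodic global rebuilds). One small point worth noting is that you correctly arrive at an \emph{amortized} expected $O(1)$ update bound, which is what~\cite{DKM+94} actually provides, whereas the lemma as stated omits the word ``amortized''; this is a slight imprecision in the paper rather than a gap in your argument.
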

All of the above dictionary data structures can be augmented so that
one or more additional records are associated with each element of $S$;
the record(s) associated with element $a\in S$ can be accessed in $O(1)$ time.

In Section~\ref{sec:adaptive}, we also use the following dynamic partial-sums data structure, due to Moffat~\cite{Mof90}:

\begin{lemma} \label{lem:partial-sums}
There is a dynamic searchable partial-sums data structure that stores a sequence of \(O (\log \sigma)\)-bit real numbers \(p_1, \ldots, p_k\) in \(O (k \log \sigma)\) bits and supports the following operations in \(O (\log i)\) time:
\begin{itemize}
\item given an index $i$, return the $i$-th partial sum \(p_1 + \cdots + p_i\);
\item given a real number $b$, return the index $i$ of the largest partial sum \(p_1 + \cdots + p_i \leq b\);
\item given an index $i$ and a real number $d$, add $d$ to $p_i$.
\end{itemize}
\end{lemma}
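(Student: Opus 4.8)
The plan is to realize the structure as a binary tree whose leaves hold $p_1,\ldots,p_k$ in left-to-right order, with every node storing the sum of the $p$-values in its subtree, and to \emph{shape} the tree so that the leaf holding $p_i$ lies at depth $O(\log i)$. Given such a tree, all three operations reduce to walking a single root-to-leaf (or root-to-node) path with $O(1)$ work per node: for the $i$-th partial sum, descend to leaf $i$, starting an accumulator at $p_i$ and adding the stored sum of the left sibling each time the path turns right; for the search with threshold $b$, descend from the root, at each node comparing $b$ against the sum stored at the left child and going left if $b$ does not exceed it, otherwise subtracting that sum from $b$ and going right, until a leaf — the answer index — is reached; for the update, descend to leaf $i$ and add $d$ to the stored sum of every node on the path. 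Since $k$ is fixed and only the values change, the tree shape never has to be rebalanced.

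The crucial point is the shape that yields $O(\log i)$ rather than $O(\log k)$. I would partition the index range into blocks $B_0,B_1,B_2,\ldots$, where $B_j$ consists of the positions $i$ with $\lfloor\log_2 i\rfloor=j$, so $|B_j|\le 2^j$ and there are $O(\log k)$ blocks; build an ordinary balanced binary tree of depth $O(j)$ over each $B_j$; and hang these trees off a right-leaning spine $v_0,v_1,v_2,\ldots$, making the left child of $v_j$ the root of the tree for $B_j$ and the right child of $v_j$ equal to $v_{j+1}$. Reading leaves from left to right then reproduces $p_1,p_2,\ldots,p_k$ in order; the spine node $v_j$ sits at depth $j$, and the tree for $B_j$ adds another $O(j)$ levels, so the leaf for $p_i$ is at depth $O(j)=O(\log i)$, since $i\ge 2^j$ for $i\in B_j$. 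The whole tree has $O(k)$ nodes and, together with all subtree sums, is built bottom-up in $O(k)$ time; each node stores one $O(\log\sigma)$-bit fixed-point value — a subtree sum over at most $k$ of the $p_i$'s still fits in $O(\log k+\log\sigma)=O(\log\sigma)$ bits, using that $k=\sigma^{O(1)}$ in our setting — so the total space is $O(k\log\sigma)$ bits.

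Finally I would verify the time bounds. The partial-sum and update queries at index $i$ each touch only the $O(\log i)$ nodes on the root-to-leaf path for $p_i$, and locating the block of $i$ (computing $\lfloor\log_2 i\rfloor$ and the offset inside $B_j$) is an $O(1)$ word-RAM operation, so both run in $O(\log i)$ time. For the search, the descent ends precisely at the leaf that is returned, namely leaf $i$ for the reported index $i$, and that leaf has depth $O(\log i)$, so the search also runs in $O(\log i)$ time; the boundary conventions in the comparisons (which way to go on ties, and the degenerate case $p_1>b$) are routine. The one thing that genuinely has to be gotten right — and the sole place where a plain balanced tree, or a Fenwick tree, which would only give $O(\log k)$, is insufficient — is the spine-of-balanced-trees layout and the check that each operation stays on a single path whose length is controlled by the queried or returned index rather than by $k$.
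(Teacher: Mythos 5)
The paper does not actually prove Lemma~\ref{lem:partial-sums}; it simply cites the data structure to Moffat~\cite{Mof90}, so there is no in-paper argument to compare against. Your construction is correct and gives a clean self-contained proof of exactly what is claimed. The spine-of-balanced-trees layout puts the leaf for $p_i$ at depth $O(\log i)$; with subtree sums at every internal node, all three operations reduce to one root-to-leaf walk (accumulating left-sibling sums for the prefix query, steering by the left child's sum for the search, and incrementing along the path for the update), and the $O(1)$ computation of $\lfloor\log_2 i\rfloor$ and the in-block offset lets you navigate that path without overhead. You also correctly identify the crux: a plain balanced tree or Fenwick array gives $O(\log k)$, and the biased shape is what buys $O(\log i)$ -- this is also the essential idea behind Moffat's structure, which uses a related implicit-array layout rather than an explicit spine, so the two constructions differ only in presentation. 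The one caveat you flag, that subtree sums must themselves fit in $O(\log\sigma)$ bits for the $O(k\log\sigma)$ space bound, is genuinely needed: the lemma as stated silently assumes it. Your hypothesis $k=\sigma^{O(1)}$ suffices; in the paper's actual use the $p_j=C[j]/2^j$ sum to at most $1$ by Kraft's inequality and $k=O(\log\sigma)$, so the sums are even more tightly bounded. Overall this is a correct proof by essentially the standard route, supplied where the paper only gives a citation.
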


\section{Adaptive coding} \label{sec:adaptive}
The adaptive Shannon coding algorithm we present in this section
combines ideas from Karpinski and Nekrich's algorithm~\cite{KN??} with the
sliding-window approach, to encode $s$ in \(\lambda n H (s) +
(\lambda \ln 2 + 2 + \epsilon) n + O (\sigma^{1 / \lambda} \log^2
\sigma)\) bits using \(O (n \log H)\) time overall and \(O (\log \log
\sigma)\) time for any character, \(O (\sigma^{1 / \lambda +
  \epsilon})\) bits of memory and one pass, for any given constants
\(\lambda \geq 1\) and \(\epsilon > 0\).  Whereas Karpinski and
Nekrich's algorithm considers the whole prefix already encoded,
our new algorithm encodes each character \(s [i]\) of $s$
based only on the window \(w_i = s [\max (i - \ell, 1)..(i - 1)]\),
where \(\ell = \left\lceil c \sigma^{1 / \lambda} \log \sigma
\right\rceil\) and $c$ is a constant we will define later in terms of
$\lambda$ and $\epsilon$.  (With \(c = 10\), for example, we produce
an encoding of fewer than \(\lambda n H (s) + (2 \lambda + 2) n + O
(\sigma^{1 / \lambda} \log^2 \sigma)\) bits; with \(c = 100\), the
bound is \(\lambda n H (s) + (0.9 \lambda + 2) n + O (\sigma^{1 /
  \lambda} \log^2 \sigma)\) bits.)  Let \(f (a, s [i..j])\) denote the
number of occurrences of $a$ in \(s [i..j]\).  For \(1 \leq i \leq
n\), if \(f (s [i], w_i) \geq \ell / \sigma^{1 / \lambda}\), then we
write a 1 followed by \(s [i]\)'s codeword in our adaptive Shannon
code; otherwise, we write a 0 followed by \(s [i]\)'s \(\lceil \log
\sigma \rceil\)-bit index in the alphabet.

As in the case of  the quantized Shannon coding~\cite{KN??}, our algorithm maintains
a canonical Shannon code. In a canonical code~\cite{SK64,Con73},
each codeword can be characterized by its length and its position
among codewords of the same length, henceforth called {\em offset}.
The codeword of length $j$ with offset $k$ can be computed
as $\sum_{h=1}^{j-1} n_h/2^h + (k-1)/2^j$.

\tolerance=1000
We maintain four dynamic data structures: a queue $Q$, an augmented
dictionary $D$, an array \(A \left[ \rule{0ex}{2ex} 0..\lceil \log
  \sigma^{1 / \lambda} \rceil, 0..\lfloor \sigma^{1 / \lambda} \rfloor
\right]\) and a searchable partial-sums data structure $P$.  (We
actually use $A$ only while decoding but, to emphasize the symmetry
between the two procedures, we refer to it in our explanation of
encoding as well.)  When we come to encode or decode \(s [i]\),
\begin{itemize}
\item $Q$ stores $w_i$;
\item $D$ stores each character $a$ that occurs in $w_i$, its
  frequency $f (a, w_i)$ there and, if \(f (a, w_i) \geq \ell /
  \sigma^{1 / \lambda}\), its position in $A$;
\item $A[ ]$ is an array of doubly-linked lists. The list $A[j]$,
 \(0 \leq j \leq \lceil \log \sigma^{1 / \lambda} \rceil\),
contains all characters with codeword length $j$ sorted by
the codeword offsets; we denote by $A[j].l$ the pointer to the
last element in $A[j]$.
\item $C[j]$ stores the number of codewords of length $j$
\item $P$ stores \(C [j] / 2^j\) for each $j$ and supports prefix sum
queries.
\end{itemize}
We implement $Q$ in \(O (\ell \log \sigma) = O (\sigma^{1 / \lambda}
\log^2 \sigma)\) bits of memory,  $A$ in \(O (\sigma^{1 /
  \lambda} \log^2 \sigma)\) bits, and $P$ in \(O (\log^2 \sigma)\)
bits by Lemma~\ref{lem:partial-sums}.
The dictionary $D$ uses  \(O (\sigma^{1 / \lambda +
  \epsilon})\) bits and supports queries and updates in $O(1)$
worst-case time
by Lemma~\ref{lemma:trie}; if we allow randomization, we can apply
Lemma~\ref{lemma:rand} and reduce
the space usage to $O(\sigma^{1/\lambda}\log^2\sigma)$ bits,
but updates are supported in $O(1)$ expected time.
Therefore, altogether we use \(O
(\sigma^{1 / \lambda + \epsilon})\) bits of memory;
if randomization is allowed, the space usage is reduced to
$O(\sigma^{1/\lambda}\log^2\sigma)$ bits.

To encode \(s [i]\), we first search in $D$ and, if \(f (s [i], w_i) < \ell / \sigma^{1 / \lambda}\), we simply write a 0
followed by \(s [i]\)'s index in the alphabet, update the data
structures as described below, and proceed to \(s [i + 1]\); if \(f (s [i], w_i) \geq \ell / \sigma^{1 / \lambda}\), we
use $P$ and \(s [i]\)'s position \(A [j, k]\) in $A$ to compute
\[\sum_{h = 0}^{j - 1} C [h] / 2^h + (k - 1) / 2^j \leq 1\,.\]
The first \(j = \lceil \log (\ell / f (s [i], w_i)) \rceil\) bits of
this sum's binary representation are enough to uniquely identify \(s
[i]\) because, if a character \(a \neq s [i]\) is stored at \(A [j',
k']\), then
\[\left| \left(\sum_{h = 0}^{j - 1} C [h] / 2^h + (k - 1) / 2^j \right) -
  \left( \sum_{h = 0}^{j' - 1} C [h ] / 2^h + (k' - 1) / 2^{j'}
  \right) \right| \geq 1 / 2^j\,;\] therefore, we write a 1 followed
by these bits as the codeword for \(s [i]\).

To decode \(s [i]\), we read the next bit in the encoding; if it is a
0, we simply interpret the following \(\lceil \log \sigma \rceil\)
bits as \(s [i]\)'s index in the alphabet, update the data structures,
and proceed to \(s [i + 1]\); if it is a 1, we interpret the following
\(\lceil \log \sigma^{1 / \lambda} \rceil\) bits (of which \(s [i]\)'s
codeword is a prefix) as a binary fraction $b$ and search in $P$ for
index $j$ of the largest partial sum \(\sum_{h = 0}^{j - 1} C [h ] /
2^h \leq b\).  Knowing $j$ tells us the length of \(s [i]\)'s codeword
or, equivalently, its row in $A$; we can also compute its offset,
\[k = \left\lfloor \frac{b - \sum_{h = 0}^{j - 1} C [h] / 2^h}{2^j} \right\rfloor + 1\,;\]
thus, we can find and write \(s [i]\).

Encoding or decoding \(s [i]\) takes \(O (1)\) time for querying $D$
and $A$ and, if \(f (s [i], w_i) \geq \ell / \sigma^{1 / \lambda}\),
then
\[O \left( \rule{0ex}{2ex} \log \log \frac{\ell}{f (s [i], w_i)} \right)
=  O (\log \log \sigma)\] time to query $P$.  After encoding or
decoding \(s [i]\), we update the data structures as follows:
\begin{itemize}
\item
we dequeue \(s [i
- \ell]\) (if it exists) from $Q$ and enqueue \(s [i]\);
we decrement \(s [i - \ell]\)'s frequency in $D$ and delete it if it
does not occur in $w_{i + 1}$; insert \(s [i]\) into $D$ if it does
not occur in $w_i$ or, if it does, increment its frequency;
\item
we remove \(s [i - \ell]\) from $A$ (by replacing it with the last
character in its list \(A[j]\),
decrementing $C[j]$, and
updating $D$) if
\[f (s [i - \ell], w_{i + 1})
< \ell / \sigma^{1 / \lambda} \leq f (s [i - \ell], w_i)\,;\]
\item
we move \(s [i - \ell]\) from list $A[j]$ to list $A[j+1]$ if
\[\lceil \log \sigma^{1 / \lambda} \rceil
\geq \left\lceil \log \frac{\ell}{f (s [i - \ell], w_{i + 1})}
\right\rceil > \left\lceil \log \frac{\ell}{f (s [i - \ell], w_i)}
\right\rceil\,;\]
this is done by replacing $s[i-\ell]$ with $A[j].l$, and appending
$s[i-\ell]$  at the end of $A[j+1]$; pointers $A[j].l$ and $A[j+1].l$
and counters $C[j]$ and $C[j+1]$ are also
updated;
\item
if necessary, we insert \(s [i]\) into $A$ or
move it from $A[j]$ to $A[j+1]$; these procedures
are symmetric to deleting $s[i-\ell]$ and to moving $s[i-\ell]$
from $A[j]$ to $A[j-1]$
\item
finally, if we have changed $C$, the data structure $P$ is updated.
\end{itemize}
  All of these updates, except the last one, take \(O (1)\) time, and
updating $P$ takes \(O (\log \log \sigma)\) time in the worst case.
 When we insert a new
element $s[i]$ into $Q$, this may lead to updating $P$ as described above.
We may decrement the length of $s[i]$ or insert a new codeword for the
symbol $s[i]$. In both cases, we can $P$ updated in $O(length(s[i]))$ time, where
$length(s[i])$ is the current codeword length of $s[i]$.
When we delete an element $s[i-\ell]$,
we may increment the codeword length  of $s[i-\ell]$ or remove it
from the code. If the codeword length is incremented, then we update $P$
in $O(\length(s[i-\ell]))$ time. If we remove the codeword for $s[i-\ell]$, then we also update  $P$ in $O(\length(s[i-\ell]))$ time; in the last case we can
charge the cost of updating $P$ to the previous occurrence of $s[i-\ell]$
in the string $s$, when $s[i-\ell]$ was encoded with $\length(s[i-\ell])$ bits.
The codeword lengths of symbols $s[i]$ and $s[i-\ell]$ are
$O \left( \rule{0ex}{2ex} \log \log \frac{\ell}{f (s [i], w_i)} \right)$
and $O \left( \rule{0ex}{2ex} \log \log \frac{\ell}{f (s [i-\ell], w_i)} \right)$
respectively.
Hence, by Jensen's
inequality, in total we encode $s$ in \(O (n \log H')\) time, where
$H'$ is the average number of bits per character in our encoding.
In the next section, we will prove that the sliding-window Shannon coding
encodes $s$ in \(\lambda n H (s) + (\lambda \ln 2 + 2 + \epsilon) n + O
(\sigma^{1 / \lambda} \log^2 \sigma)\) bits.
Since we can assume that
 $\sigma$ is not vastly larger than $n$,
our method works in $O(n\log H)$ time.

If the dictionary $D$ is implemented as in Lemma~\ref{lemma:rand},
the analysis is exactly the same, but a string $s$ is processed
in expected time $O(n \log H)$.

\begin{lemma}\label{lemma:time}
Sliding-window Shannon coding can be implemented in
 \(O (n \log H)\) time overall and \(O (\log
  \log \sigma)\) time for any character, \(O (\sigma^{1 / \lambda +
    \epsilon})\) bits of memory and one pass.
If randomization is allowed, sliding-window Shannon coding
can be implemented in \(O (\sigma^{1 / \lambda}\log^2\sigma)\)
bits of memory and $O(n\log H)$ expected time.
\end{lemma}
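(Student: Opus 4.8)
The plan is to establish the three claimed bounds --- $O(\sigma^{1/\lambda+\eps})$ bits, $O(\log\log\sigma)$ time per character, and $O(n\log H)$ time overall in a single pass --- by assembling the data-structure facts of Section~\ref{sec:prelim} with the accounting sketched above. For the space, I would simply add up the five structures: $Q$ stores $\ell=O(\sigma^{1/\lambda}\log\sigma)$ alphabet indices and so needs $O(\sigma^{1/\lambda}\log^2\sigma)$ bits; the rows of $A$ contain at most $\ell$ characters in total and the counters $C$ add only $O(\log^2\sigma)$ bits, so $A$ and $C$ together fit in $O(\sigma^{1/\lambda}\log^2\sigma)$ bits; $P$ fits in $O(\log^2\sigma)$ bits by Lemma~\ref{lem:partial-sums}; and since the number of distinct characters in any window never exceeds $\ell$ while $\sigma=\ell^{O(1)}$ (the hidden constant being essentially $\lambda$), Lemma~\ref{lemma:trie} applied with parameter $\ell$ and a suitable $\eps'=\eps'(\lambda,\eps)$ gives a dictionary $D$ --- with room for the frequency and $A$-position records --- using $O(\ell^{1+\eps'})=O(\sigma^{1/\lambda+\eps})$ bits and supporting membership and update in $O(1)$ worst-case time. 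The total is $O(\sigma^{1/\lambda+\eps})$ bits, and the algorithm is one-pass because everything it consults about the past is contained in the suffix $w_i$ of the already-read prefix.

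Next I would bound the work per character. Encoding or decoding $s[i]$ costs $O(1)$ for the lookups in $D$ and $A$; when $s[i]$ counts as frequent it additionally performs one search in $P$, and since $s[i]$'s codeword then has length $j=\lceil\log(\ell/f(s[i],w_i))\rceil$ this search runs in $O(\log j)=O(\log\log(\ell/f(s[i],w_i)))=O(\log\log\sigma)$ time. The subsequent maintenance --- the enqueue/dequeue on $Q$, the increment, decrement, insertion or deletion in $D$, and the insertions, deletions and promotions between adjacent rows of $A$ --- is $O(1)$ per operation: deletion and promotion never force a re-sort because we overwrite the departing element with the element pointed to by the list's tail pointer $A[j].l$ and fix a constant number of pointers and counters. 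The one step that is not $O(1)$ is the update of $P$, which costs $O(\length(s[i]))$ when inserting or shortening a codeword and $O(\length(s[i-\ell]))$ when lengthening or removing one; on a removal I would charge the cost to the previous occurrence of that character, which was written with that many bits, so that, after a single-step look-back, the per-character cost stays $O(\log\log\sigma)$ in the worst case.

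For the overall time I would observe that the cost attributed to $s[i]$ is $O(1+\log\length_i)$, where $\length_i$ is the number of bits the algorithm emits for $s[i]$; since every character is emitted with at least one bit, $\sum_{i=1}^n(1+\log\length_i)=O\!\left(\sum_{i=1}^n\log\length_i\right)$, and by concavity of the logarithm (Jensen's inequality) this is $O\!\left(n\log\!\big(\frac1n\sum_{i=1}^n\length_i\big)\right)=O(n\log H')$, where $H'$ is the average number of bits per symbol produced. The length bound to be proved in Section~\ref{sec:analysis} gives $H'=\lambda H(s)+O(1)+O(\sigma^{1/\lambda}\log^2\sigma/n)$, so under the standing assumption that $\sigma$ is not vastly larger than $n$ the last term is $o(1)$ and $\log H'=O(\log H)$, yielding $O(n\log H)$ total time. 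For the randomized version I would replace Lemma~\ref{lemma:trie} by Lemma~\ref{lemma:rand}, implementing $D$ in $O(\ell\log\sigma)=O(\sigma^{1/\lambda}\log^2\sigma)$ bits with $O(1)$ \emph{expected}-time updates; since all the other structures already fit in $O(\sigma^{1/\lambda}\log^2\sigma)$ bits, the memory drops to that bound and the per-character (hence the total) time estimate becomes an expectation, giving $O(n\log H)$ expected time.

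The part I expect to need the most care is not a computation but the bookkeeping: verifying that each of the $A$-list operations is genuinely $O(1)$ (in particular that a promotion between adjacent rows never requires re-sorting by offset) and that the charging argument for updates of $P$ --- moving the cost of a codeword removal onto the earlier occurrence of the same symbol --- is watertight, since that is what converts an amortized bound into the claimed worst-case $O(\log\log\sigma)$ per symbol. The substantive quantitative content, the $\lambda nH(s)+(\lambda\ln2+2+\eps)n+O(\sigma^{1/\lambda}\log^2\sigma)$ length bound that underlies the $\log H'$ step, is deferred to Section~\ref{sec:analysis}.
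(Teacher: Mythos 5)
Your argument is correct and tracks the paper's own reasoning essentially step for step: the paper does not give a separate proof of Lemma~\ref{lemma:time} but establishes it in the paragraphs that precede it, and your write-up is a tidied reconstruction of that discussion (space bound from $Q$, $A$, $C$, $P$, and $D$ via Lemma~\ref{lemma:trie} or Lemma~\ref{lemma:rand}; worst-case $O(\log\log\sigma)$ per character because $P$ has $O(\log\sigma)$ entries; $O(n\log H)$ overall via the charging argument for codeword removals and Jensen's inequality applied to the logarithms of the emitted codeword lengths). You also correctly single out the two points that deserve care --- the $O(1)$ cost of $A$-list promotions via the tail pointer $A[j].l$, and the soundness of charging a removal's $P$-update to the earlier occurrence of the same symbol --- which are indeed where the argument's informality lives.

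Two small glitches worth fixing. First, you write that updating $P$ costs $O(\length(s[i]))$ (respectively $O(\length(s[i-\ell]))$), echoing what appears to be a typo in the paper: Moffat's Lemma~\ref{lem:partial-sums} gives $O(\log j)$ time for an operation at index $j$, and here $j$ is the codeword length, so the update cost is $O(\log\length(s[i]))$; your Jensen step correctly uses $O(\log\length_i)$, so only the earlier sentence needs the correction to be internally consistent. Second, the reduction $\sum_i(1+\log\length_i)=O(\sum_i\log\length_i)$ requires $\length_i\geq 2$ rather than $\geq 1$ (with $\length_i=1$ the right-hand side vanishes); this does hold here because every emitted symbol carries a flag bit plus at least one further bit, but ``at least one bit'' as you state it is not quite enough.
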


\section{Analysis}
\label{sec:analysis}
In this section we prove the upper bound on the encoding length
of sliding-window Shannon coding and obtain the following Theorem.
\begin{theorem} \label{thm:adaptive}
  We encode $s$ in, and later decode it from, \(\lambda n H (s) +
  (\lambda \ln 2 + 2 + \epsilon) n + O (\sigma^{1 / \lambda} \log^2
  \sigma)\) bits using \(O (n \log H)\) time overall and \(O (\log
  \log \sigma)\) time for any character, \(O (\sigma^{1 / \lambda +
    \epsilon})\) bits of memory and one pass.
  If randomization is allowed, the memory usage can be reduced
  to \(O(\sigma^{1 / \lambda}\log^2\sigma)\)
  bits and $s$ can be encoded and decoded in $O(n\log H)$ expected time.
\end{theorem}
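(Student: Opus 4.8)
The running-time, memory and single-pass bounds are exactly Lemma~\ref{lemma:time}, with Lemma~\ref{lemma:rand} giving the randomized variant, so it remains to check that decoding inverts encoding and to bound the length of the output. Decodability is routine: the leading flag bit picks the case, and in the frequent case the $j=\lceil\log(\ell/f(s[i],w_i))\rceil$ transmitted bits determine the value $\sum_{h<j}C[h]/2^h+(k-1)/2^j$ up to an error below $2^{-j}$, which by the gap inequality of Section~\ref{sec:adaptive} pins down the row $j$ and offset $k$ of $s[i]$'s codeword, hence $s[i]$; since the encoder and decoder maintain $Q$, $D$, $A$, $C$, $P$ by the same deterministic rules, they stay synchronized. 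Below let $\tau=\lceil\ell/\sigma^{1/\lambda}\rceil$, so ``frequent at $w_i$'' means ``occurs at least $\tau$ times in $w_i$''.

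For the length bound I would start by disposing of short inputs: if $n\le K_0\ell$ for a suitable constant $K_0=K_0(\lambda)$, the output has at most $n(1+\lceil\log\sigma\rceil)=O(\sigma^{1/\lambda}\log^2\sigma)$ bits and we are done, so assume $n>K_0\ell$. The output is $n$ flag bits together with, for each $i$, a string of length $\lceil\log(\ell/f(s[i],w_i))\rceil$ (frequent) or $\lceil\log\sigma\rceil$ (infrequent); replacing each ceiling by ``$+1$'', its length is at most $2n+\sum_i t_i$, where $t_i=\log(\ell/f(s[i],w_i))$ in the frequent case — and then $t_i\le(1/\lambda)\log\sigma$, because $f(s[i],w_i)\ge\tau$ — and $t_i=\log\sigma$ otherwise; in particular $t_i\le\log\sigma$ always. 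Writing $\sum_i t_i=\sum_a\Sigma_a$ with $\Sigma_a=\sum_{i:\,s[i]=a}t_i$, I would split the alphabet into the \emph{rare} symbols $\mathcal R=\{a:f_a\le n/\sigma^{1/\lambda}\}$ and the \emph{common} symbols $\mathcal C$, of which there are fewer than $\sigma^{1/\lambda}$. Rare symbols are immediate: if $a\in\mathcal R$ then $n/f_a\ge\sigma^{1/\lambda}$, so $t_i\le\log\sigma\le\lambda\log(n/f_a)$ for every occurrence of $a$, hence $\Sigma_a\le\lambda f_a\log(n/f_a)$ and $\sum_{a\in\mathcal R}\Sigma_a\le\lambda\sum_{a\in\mathcal R}f_a\log(n/f_a)$.

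The heart of the proof is a per-symbol estimate $\Sigma_a\le\lambda f_a\log(n/f_a)+\lambda(\ln 2+o_c(1))f_a+O(\log^2\sigma)$ for each common $a$, where $o_c(1)\to0$ as $c\to\infty$; granting this and summing over $\mathcal C$ (using $|\mathcal C|<\sigma^{1/\lambda}$ and $\sum_a f_a=n$), then adding the rare bound and fixing $c=c(\lambda,\epsilon)$ large enough, gives $\sum_i t_i\le\lambda nH(s)+(\lambda\ln 2+\epsilon)n+O(\sigma^{1/\lambda}\log^2\sigma)$, so the output has at most $2n+\sum_i t_i=\lambda nH(s)+(\lambda\ln 2+2+\epsilon)n+O(\sigma^{1/\lambda}\log^2\sigma)$ bits (the way the constant depends on $c$ matches the examples in Section~\ref{sec:adaptive}). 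To prove the per-symbol estimate, fix a common $a$, list its occurrences $p_1<\dots<p_m$ with $m=f_a$, and compare each window count $f(a,w_{p_j})$ with the ``expected'' count $\ell f_a/n$, which is larger than $\tau$ precisely because $a$ is common. The plan is to partition $p_1,\dots,p_m$ into maximal \emph{runs} (maximal sets of consecutive occurrences with gaps $<\ell$), note that distinct runs are separated by gaps $\ge\ell$ and hence occupy essentially disjoint, length-$\ell$-separated pieces of the text, and inside each run bound the cost by tracking how fast the window count rises: an occurrence becomes frequent only after $\tau$ of its predecessors lie within $\ell$, and once the count is large its codeword is much shorter than $\log(n/f_a)$. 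A first pass through this argument, cutting runs greedily into blocks that span $<\ell$ (so the $r$-th occurrence of a block has at least $r$ occurrences of $a$ in its window) and using Stirling's bound $\log(b!)\ge b\log b-b\log e$ together with convexity of $x\log x$ over the $O(n/\ell)$ blocks, yields $\Sigma_a\le\lambda f_a\log(n/f_a)+\lambda(\log e)f_a+O(\log^2\sigma)$.

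The main obstacle is to replace the constant $\log e=1/\ln2\approx1.44$ in that first pass by the claimed $\ln2\approx0.69$; this is where the calculation genuinely has to be carried out, and it is the only non-routine part. The block bound above is wasteful precisely when consecutive blocks of $a$ abut (as in an all-$a$ string), since then the early occurrences of a block are already frequent with a large count thanks to the preceding block, and their true cost is far below what the block bound charges. Quantifying this — so that the $O(\log e)$ loss is incurred only at the $O(n/\ell)$ run (not block) boundaries, each contributing $O(\log^2\sigma)$, which is the true source of the $O(\sigma^{1/\lambda}\log^2\sigma)$ additive term rather than an extra linear term — leaves a residual linear overhead that, after one uses that a common symbol is infrequent in only $O(\tau\cdot n/\ell)=O(n/\sigma^{1/\lambda})$ of the windows containing its occurrences (and that, when $f_a$ is close to the threshold, $\lambda\log(n/f_a)\approx\log\sigma$ pays for those exactly as for a rare symbol), amounts to at most $\lambda(\ln2+o_c(1))f_a$; choosing $c$ large makes the $o_c(1)$ below $\epsilon/(2\lambda)$ and finishes the proof. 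The randomized statement is identical after substituting Lemma~\ref{lemma:rand} for the implementation of $D$.
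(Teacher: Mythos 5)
Your proof departs from the paper's and, more importantly, has a genuine gap at exactly the point you flag as non‑routine. The paper does not attempt a per‑symbol bound across the whole string: it cuts $s$ into consecutive blocks $s'=s[k..(k+\ell-1)]$ of length $\ell$, defines $F$ to be the characters occurring at least $\ell/\sigma^{1/\lambda}$ times in the \emph{extended} window $s[\max(k-\ell,1)..(k+\ell-1)]$ (so $|F|\le 2\sigma^{1/\lambda}$), and shows that for every $i$ in the block the codeword length is below $\lambda\log\bigl(\ell/\max(f(s[i],s[k..i-1]),1)\bigr)+2$ (for $a\in F$, charging against the prefix count inside the block) or $\lambda\log(\ell/f(s[i],s'))+2$ (for $a\notin F$). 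Summing the first kind telescopes to $\log((f(a,s')-1)!)$, Stirling's bound turns that into $f(a,s')\log f(a,s')$ minus a linear term plus an $O(\sigma^{1/\lambda}\log\ell)$ error, and the block cost becomes $\lambda\ell H(s')+(\lambda\cdot\text{const}+2+\epsilon)\ell$. Finally superadditivity of $|s|\,H(s)$ sums this over the $\lfloor n/\ell\rfloor$ blocks plus one tail block, giving the $O(\sigma^{1/\lambda}\log^2\sigma)$ additive term. Your rare/common split and per‑symbol estimate $\Sigma_a$ is a legitimate alternative bookkeeping, but it trades the paper's clean telescoping‑within‑a‑block for a much harder object: one has to control $\sum_{i:s[i]=a}\log(\ell/f(a,w_i))$ as $a$'s occurrences drift in and out of sliding windows across the whole string, and your description of doing this via maximal ``runs,'' block‑cutting, and convexity is only a plan, not an argument.

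Concretely, the gap is in the paragraph beginning ``The main obstacle.'' You correctly observe that the natural Stirling lower bound $\log(b!)\ge b\log b-b\log e$ yields a linear coefficient of $\log e=1/\ln 2\approx1.44$, and you then assert, without a calculation, that a refinement charging boundary effects to run endpoints recovers the smaller constant $\ln 2\approx0.69$. Nothing in your sketch makes that plausible; the ``first pass'' bound is already tight for the all‑$a$ string, which has a single run, so there are no run boundaries to absorb the slack, and the claimed improvement does not materialize. (In fact the paper's own Stirling step reads $\log(f(a,s')!)\ge f(a,s')\log f(a,s')-f(a,s')\ln 2$, which is the same inequality with $\log e$ miswritten as $\ln 2$; the correct linear coefficient after the paper's argument is $\lambda\log e$, not $\lambda\ln 2$. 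You should not set yourself the goal of proving the stronger $\ln 2$ constant — it is almost certainly a typo — but you must carry \emph{some} Stirling‑type calculation to completion, and you stop exactly there.) A second, smaller issue: your reduction of the time/memory/one‑pass claims to Lemma~\ref{lemma:time} is fine, but you never verify that the additive $O(\sigma^{1/\lambda}\log^2\sigma)$ term actually arises in your decomposition; in the paper it comes cleanly from the tail block and the $|F|\log\ell$ error per block, whereas in your sketch it is attributed to run boundaries whose number you never bound. To repair the proof the most direct route is to abandon the per‑symbol decomposition and follow the paper's block decomposition with superadditivity; the per‑symbol route may work but would need the run analysis written out in full, with explicit inequalities, which your proposal does not supply.
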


\begin{proof}
  Consider any substring \(s' = s [k..(k + \ell - 1)]\) of $s$ with
  length $\ell$, and let $F$ be the set of characters $a$ such that
\[f \left( \rule{0ex}{2ex} a, s [\max (k - \ell, 1)..(k + \ell - 1)] \right)
\geq \frac{\ell}{\sigma^{1 / \lambda}}\,;\] notice \(|F| \leq 2
\sigma^{1 / \lambda}\).  For \(k \leq i \leq k + \ell - 1\), if \(s
[i] \in F\) but \(f (s [i], w_i) < \ell / \sigma^{1 / \lambda}\), then
we encode \(s [i]\) using
\begin{eqnarray*}
\lefteqn{\lceil \log \sigma \rceil + 1}\\
& < & \lambda \log \sigma^{1 / \lambda} + 2\\
& < & \lambda \log \frac{\ell}
    {\max \left( \rule{0ex}{2ex} f (s [i], w_i), 1 \right)} + 2\\
& \leq & \lambda \log \frac{\ell}
    {\max \left( f \left( \rule{0ex}{2ex} s [i], s [k..(i - 1)] \right), 1 \right)} + 2
\end{eqnarray*}
bits; if \(f (s [i], w_i) \geq \ell / \sigma^{1 / \lambda}\), then we
encode \(s [i]\) using
\[\left\lceil \log \frac{\ell}{f (s [i], w_i)} \right\rceil + 1
< \lambda \log \frac{\ell} {\max \left( f \left( \rule{0ex}{2ex} s
      [i], s [k..(i - 1)] \right), 1 \right)} + 2\] bits; finally, if
\(s [i] \not \in F\), then we again encode \(s [i]\) using
\begin{eqnarray*}
\lefteqn{\lceil \log \sigma \rceil + 1}\\
& < & \lambda \log \sigma^{1 / \lambda} + 2\\
& < & \lambda \log \frac{\ell}
    {f \left( \rule{0ex}{2ex} s [i], s [\max (k - \ell, 1)..(k + \ell - 1)] \right)} + 2\\
& \leq & \lambda \log \frac{\ell}{f (s [i], s')} + 2
\end{eqnarray*}
bits.  Therefore, the total number of bits we use to encode $s'$ is
less than
\begin{eqnarray*}
\lefteqn{\lambda \sum_{a \in F} \sum_{s [i] = a, \atop k \leq i \leq k + \ell - 1} \log \frac{\ell}
    {\max \left( f \left(\rule{0ex}{2ex} a, s [k..(i - 1)] \right), 1 \right)} +}\\
&& \hspace{10ex} \lambda \sum_{a \not \in F} f (a, s') \log \frac{\ell}{f (a, s')} + 2 \ell\\
& = & \lambda \ell \log \ell -
    \lambda \sum_{a \in F} \sum_{s [i] = a, \atop k \leq i \leq k + \ell - 1}
        \log \left( \max \left( f \left( \rule{0ex}{2ex} a, s [k..(i - 1)] \right), 1 \right) \right) -\\
&& \hspace{10ex} \lambda \sum_{a \not \in F} f (a, s_i) \log f (a, s') + 2 \ell\,;
\end{eqnarray*}
since
\[\sum_{s [i] = a, \atop k \leq i \leq k + \ell - 1}
\log \max \left( f \left( \rule{0ex}{2ex} a, s [k..(i - 1)] \right), 1
\right) = \sum_{j = 1}^{f (a, s') - 1} \log j\,,\] we can rewrite our
bound as
\begin{eqnarray*}
\lefteqn{\lambda \left( \ell \log \ell -
    \sum_{a \in F} \sum_{j = 1}^{f (a, s') - 1} \log j -
    \sum_{a \not \in F} f (a, s') \log f (a, s') \right) + 2 \ell}\\
& = & \lambda \left( \ell \log \ell -
    \sum_{a \in F} \log ((f (a, s') - 1)!) -
    \sum_{a \not \in F} f (a, s') \log f (a, s') \right) + 2 \ell\,;
\end{eqnarray*}
by Stirling's Formula,
\begin{eqnarray*}
\lefteqn{\ell \log \ell - \sum_{a \in F} \log ((f (a, s') - 1)!)}\\
& = & \ell \log \ell - \sum_{a \in F} \log ((f (a, s')!) + \sum_{a \in F} \log f (a, s')\\
& \leq & \ell \log \ell -
    \sum_{a \in F} \left( \rule{0ex}{2ex} f (a, s') \log f (a, s') - f (a, s') \ln 2 \right) +
    |F| \log \ell\\
& \leq & \ell \log \ell - \sum_{a \in F} f (a, s') \log f (a, s') +
    \ell \ln 2 + 2 \sigma^{1 / \lambda} \log \ell\,,
\end{eqnarray*}
so we can again rewrite our bound as
\begin{eqnarray*}
\lefteqn{\lambda \left( \ell \log \ell - \sum_a f (a, s') \log f (a, s') +
    \ell \ln 2 + 2 \sigma^{1 / \lambda} \log \ell \right) + 2 \ell}\\
& = & \lambda \sum_a f (a, s') \log \frac{\ell}{f (a, s')} +
    \left( \lambda \ln 2 + 2 + \frac{2 \lambda \sigma^{1 / \lambda} \log \ell}{\ell} \right) \ell\\
& = & \lambda \ell H (s') +
    \left( \lambda \ln 2 + 2 + \frac{2 \lambda \sigma^{1 / \lambda} \log \ell}{\ell} \right) \ell\,.
\end{eqnarray*}
Recall \(\ell = \left\lceil c \sigma^{1 / \lambda} \log \sigma
\right\rceil\), so
\begin{eqnarray*}
\lefteqn{\frac{2 \lambda \sigma^{1 / \lambda} \log \ell}{\ell}}\\
& = & \frac{2 \lambda \sigma^{1 / \lambda}
    \log \left\lceil c \sigma^{1 / \lambda} \log \sigma \right\rceil}
    {\left\lceil c \sigma^{1 / \lambda} \log \sigma \right\rceil}\\
& \leq & \frac{2 \lambda
    \left( \rule{0ex}{2ex}\log c + (1 / \lambda) \log \sigma + \log \log \sigma + 1 \right)}
    {c \log \sigma}\\
& \leq & \frac{2 \lambda (\log c + 3)}{c}
\end{eqnarray*}
(we will give tighter inequalities in the full paper, but use these
here for simplicity); for any constants \(\lambda \geq 1\) and
\(\epsilon > 0\), we can choose a constant $c$ large enough that
\[\frac{2 \lambda (\log c + 3)}{c} < \epsilon\,,\]
so the number of bits we use to encode $s'$ is less than \(\lambda
\ell H (s') + (\lambda \ln 2 + 2 + \epsilon) \ell\). With \(c = 10\),
for example,
\[\frac{2 \lambda (\log c + 3)}{c}
< (2 - \ln 2) \lambda\,,\] so our bound is less than \(\lambda \ell H
(s') + (2 \lambda + 2) \ell\); with \(c = 100\), it is less than
\(\lambda \ell H (s') + (0.9 \lambda + 2) \ell\).

Since the product of length and empirical entropy is superadditive ---
i.e., \(|s_1| H (s_1) + |s_2| H (s_2) \leq |s_1 s_2| H (s_1 s_2)\) ---
we have
\[\ell \sum_{j = 0}^{\lfloor n / \ell \rfloor - 1} H \left( \rule{0ex}{2ex} s [(j \ell + 1)..(j + 1) \ell] \right)
\leq n H (s)\] so, by the bound above, we encode the first \(\ell
\lfloor n / \ell \rfloor\) characters of $s$ using fewer than
\(\lambda n H (s) + (\lambda \ln 2 + 2 + \epsilon) n\) bits.  We
encode the last $\ell$ characters of $s$ using fewer than
\[\lambda \ell H (s [(n - \ell)..n]) + (\lambda \ln 2 + 2 + \epsilon) \ell
= O (\ell \log \sigma) = O (\sigma^{1 / \lambda} \log^2 \sigma)\] bits
so, even counting the bits we use for \(s [(n - \ell + 1)..\ell
\lfloor n / \ell \rfloor]\) twice, in total we encode $s$ using fewer
than
\[\lambda n H (s) + (\lambda \ln 2 + 2 + \epsilon) n + O (\sigma^{1 / \lambda} \log^2 \sigma)\]
bits.
\end{proof}

If the most common $\sigma^{1 / \lambda}$ characters in the alphabet
make up much more than half of $s$ (in particular, when \(\lambda =
1\)) then, instead of using an extra bit for each character, we can
keep a special escape codeword and use it to indicate occurrences of
characters not in the code.  The analysis becomes somewhat
complicated, however, so we leave discussion of this modification for
the full paper.

\section{Summary}
In this paper we presented an algorithm that uses space sub-linear
in the alphabet size and achieves an encoding  length that is close to the
lower bound of~\cite{GM07}.  Our algorithm processes
each symbol  in $O(\log \log \sigma)$ worst-case time,
whereas linear-space prefix coding algorithms can encode
a string of $n$ symbols in $O(n)$ time, i.e. in time independent of the alphabet size $\sigma$.
It is an interesting open problem 
 whether our algorithm (or one with the same space bound) can be made to run in \(O (n)\) time.

%\begin{singlespace}
\begin{footnotesize}
\bibliographystyle{plain}
\bibliography{lowmem-081108}
\end{footnotesize}
%\end{singlespace}

\end{document}